% ****** Start of file apssamp.tex ******
%
%   This file is part of the APS files in the REVTeX 4.1 distribution.
%   Version 4.1r of REVTeX, August 2010
%
%   Copyright (c) 2009, 2010 The American Physical Society.
%
%   See the REVTeX 4 README file for restrictions and more information.
%
% TeX'ing this file requires that you have AMS-LaTeX 2.0 installed
% as well as the rest of the prerequisites for REVTeX 4.1
%
% See the REVTeX 4 README file
% It also requires running BibTeX. The commands are as follows:
%
%  1)  latex apssamp.tex
%  2)  bibtex apssamp
%  3)  latex apssamp.tex
%  4)  latex apssamp.tex
%
\documentclass[%
 reprint,
%superscriptaddress,
%groupedaddress,
%unsortedaddress,
%runinaddress,
%frontmatterverbose, 
%preprint,
%showpacs,preprintnumbers,
%nofootinbib,
%nobibnotes,
%bibnotes,
 amsmath,amssymb,
 aps,
%pra,
%prb,
%rmp,
%prstab,
%prstper,
%floatfix
superscriptaddress
]{revtex4-1}

\usepackage{graphicx}% Include figure files
\usepackage{dcolumn}% Align table columns on decimal point
\usepackage{bm}% bold math
\usepackage{multirow}

\usepackage{amsmath,amssymb,amsfonts}
\usepackage{graphicx}
\usepackage{color}
\usepackage{appendix}
\usepackage{amsthm}
\usepackage{tikz}
\usepackage{youngtab}

\newtheorem{theorem}{Theorem}

\newtheorem{definition}{Definition}

\newcommand{\nn}{\nonumber}
\newcommand{\bn}{\begin{enumerate}}
\newcommand{\en}{\end{enumerate}}

\newcommand{\eq}[1]{(\ref{#1})}

\parskip 0.1 cm

%%%%%%%%%%%%% Double line letters using amssymb %%%%%%%%%%%

%%%%%%%%%%%%%%%% Caligraphic letters %%%%%%%%%%%%%%%%%%

%\def\CC{{\cal C}}

%%%%%%%%%%%%%%%%%% Greek letters %%%%%%%%%%%%%%%%%%%%%%

\def\a{\alpha}

\def\eps{\epsilon}

% eta

% xi
% o
% pi

\def\s{\sigma}

% phi

% chi
\def\p{\psi}

%

% Xi
% Pi

% Phi
% Psi

%%%%%%%%%%%%%%%%% Mathematical Symbols %%%%%%%%%%%%%%%%%%%%

\def\iff{\Longleftrightarrow}

\def\<{\langle}
\def\>{\rangle}

%%%%%%%%%%%%%%%%%%%% Normal font in math %%%%%%%%%%%%%%%%%

%%%%%%%%%%%%%%%%%%%%% For this paper only %%%%%%%%%%%%%

%\def\bp{{\bar{\partial}}}
%\def\bi{{\bar{i}}}
%\def\bj{{\bar{j}}}

%\def\bra{\langle}
%\def\ket{\rangle}

\def\jmath{{j}}

\def\cT{{\mathcal{T}}}
\def\cO{{\mathcal{O}}} 
\def\cE{{\mathcal{E}}} 
\def\vn{{\vec{n}}}
\def\vm{{\vec{m}}}
\def\ha{\hat{a}}
\def\hb{\hat{b}}
\def\hU{\hat{U}}

%\linespread{3.0}

%\usepackage{lipsum}

\begin{document}

\title{Majorization and the time complexity of linear optical networks}% Force line breaks with 

\author{Seungbeom Chin }
\affiliation{Molecular Quantum Dynamics and Information Theory Laboratory, Department of Chemistry, Sungkyunkwan University, Suwon 16419, Korea}
\author{Joonsuk Huh}%
\email{joonsukhuh@gmail.com}
\affiliation{Molecular Quantum Dynamics and Information Theory Laboratory, Department of Chemistry, Sungkyunkwan University, Suwon 16419, Korea}

%\date{\today}% It is always \today, today,
             %  but any date may be explicitly specified

\begin{abstract}
This work shows that the majorization of photon distributions is related to  the runtime of classically simulating multimode passive linear optics, which explains one aspect of the boson sampling hardness. 
A Shur-concave quantity which we name the \emph{Boltzmann entropy of elementary quantum complexity} ($S_B^q$) is introduced to present some quantitative analysis of the relation between the majorization and the classical runtime for simulating linear optics. We compare $S_B^q$ with two quantities that are important criteria for understanding the computational cost of the photon scattering process, $\cT$ (the runtime for the classical simulation of linear optics) and $\cE$ (the additive error bound for an approximated amplitude estimator).
First, for all the known algorithms for computing the permanents of matrices with repeated rows and columns, the runtime $\cT$  becomes shorter as the input and output distribution vectors are more majorized. Second, the error bound $\cE$ decreases as the majorization difference of input and output states increases. We expect that our current results would help in understanding the feature of linear optical networks from the perspective of quantum computation. \\ 
% provide a clue to find the genuine resource of quantum supremacy and its operational mechanism in linear optics.\\
\\
keywords: majorization, entropy, passive linear optics, indistinguishability, boson sampling
%\begin{description}
%\item[PACS numbers]
%03.67.a, 03.65.Ud, 03.67.Ac
%\end{description}
\end{abstract}

%\pacs{Valid PACS appear here}% PACS, the Physics and Astronomy
                             % Classification Scheme.
\keywords{boson sampling, majorization, entropy, particle indistinguishability}%Use showkeys class option if keyword
                              %display desired
\maketitle

\section{Introduction}

%why not entanglement?
%BS is not simpler than uqc

The linear optical network (LON) is one of the convincing quantum computing processors \cite{nielson2011, scully1997, kok2007} (for the explanation on the general formalism of LON, see, e.g., Refs. \cite{zukowski1997realizable, lim2005multiphoton}). Knill, Laflamme and Milburn (KLM) proved that  linear optics with post-selected measurements is sufficiently powerful to perform universal quantum computations \cite{KLM}. In addition to the theoretical motivation, LON also has many practical advantages: the high robustness of photons against decoherence \cite{tillmann2015,biggerstaff2016,xu2016}, relatively simple operations consisting of beamsplitters and phase shifters \cite{reck1994}, a routine construction of circuits in photonic chips \cite{crespi2011,shadbolt2012,carolan}. 
 
Boson sampling (BS) \cite{Aaronson2011} is a recently introduced non-universal quantum computing model based on the LON implementation.
It has drawn wide attention since it has a potental to disprove the extended Church-Turing thesis with LON \cite{broome2013, shchesnovich2013, shen2014, tichy2014s}. Motivated by Sheel's observation \cite{Scheel2004} that the scattering amplitude ot LON is proportional to matrix permanents (solving which is \# P-complete),
Aaronson and Arkhipov~\cite{Aaronson2011} demontrated that the unitary operation of a passive interferometer with single-photon input modes is not likely to be simulated efficiently by classical computers for sufficiently large systems.

 %More specifically, when the input state comprises $N$ single photons in $M$ modes, the transition amplitude $\gamma_m$ to an output state $|\vm\> \equiv |m_1,m_2,\cdots, m_M\>$ is given by
 %\begin{align}
 %\gamma_m = \frac{\textrm{Per}(U_\vm)}{\sqrt{\prod_{i=1}^M} n_i}.
 %\end{align}
 
There exist several conditions assumed to be fulfilled for the computational hardness of BS: low photon density, the preparation of identical photons, and random scattering matrix. By varying these conditions independently, we can understand their operational mechanism and what happens when they are disrupted in LON. Among the conditions, the \emph{low photon density condition} is needed for photons not to collide at the same mode (unbunched single-photon source). One way to comprehend the physical implication of this condition is to disrupt this restriction and see how the change affects the classical simulation processes of LON scattering amplitude. Therefore, it is an interesting attempt to analyze the behavior of some crucial computational quantities under the change of photon distributions (with $U$ fixed).
 
The concept of \emph{majorization} provides a rigorous mathematical tool to examine the quantitative relations. Majorization is a preorder that determines whether a real number vector is more disordered than another real number vector (for a review, see, e.g., Ref.~\cite{olkin2}). In quantum information theory, it is a useful property to compare the amount of quantum resource between two systems.
As a specific example,
Nielsen's theorem \cite{nielsen1999} in the entanglement resource theory proves that majorization is a criterion for possible LOCC (local operations and classical communication) transformations between bipartite entanglement states. There exists a counterpart theorem in the coherence resource theory \cite{du2015}, in which the majorization is a criterion for possible incoherent operations between pure states. In addition, it is conjectured in \cite{latorre2002,orus2004} that all optimal quantum algorithms experience a monotonic decrease of majorization during the tasks, which was recently observed experimentally for some algorithms including quantum Fourier transforms \cite{flamini2016}.

For our case, the photon distribution vectors of input/output modes are evaluated from the viewpoint of majorization. Our result shows that \emph{one can find  classical algorithms with more time-efficiency when the photon distribution vectors are more majorized.} 
As a measure for the majorization of systems, we introduce \emph{Boltzmann entropy of elementary quantum complexity} $S_B^q$ (a Schur-concave function). From the close relation of $S_B$ with the amount of $\cT$ (the classical runtime for the exact computation) and  $\cE$ (the additive error bound for an approximated permanent estimator) for a LON, we expect that $S_B^q$ plays an important role to understand the quantum supremacy of BS \footnote{It is also interesting to compare this type of entropy with the generalized concurrence of Fock state introduced in \cite{huh2016computing}.}. This relation  also suggests that the potential quantum supremacy of BS arises not necessarily from the entanglement. For, while the amount of entanglement depends on the partition of modes into different observers (for the generation of entanglement in LON, see, e.g., Ref.~\cite{stanisic2017generating}), the $\cT$ and $\cE$ we can find seems to depend purely on the input/output photon distribution patterns.

%There have been some efforts to find classical algorithms for computing permanents allowing repeated columns and rows of matrices, which appear naturally when describing the transition amplitudes of multi-photon distributions in input and output photon modes \cite{aaronson2012,huh2016computing,yung2016}. It has been pointed out in \cite{aaronson2012,huh2016computing} that the distribution patterns of photons affect the computational complexity of passive LON scattering process. Especially, it was shown in Ref. \cite{huh2016computing} that the exact classical runtime has a close relation with the generalized Fock state concurrence family, which is a measure analogous to entanglement generalized concurrence \cite{gour2005} and coherence generalized concurrence \cite{chin2017gcc}. However, no rigorous criterion to measure the variation of complexity by photon distributions has been suggested so far.
%In this paper, we show that \emph{the variation of complexity can be evaluated with the majorization of photon distributions.}

 %which says that a pure state can be transformed to another pure state by an incoherent operation if and only if the amplitude vector of the former is majorized by that of the later.
 
 %In this paper, we show that majorization also plays a crucial role in understanding the physical implication of identical particle distributions in linear optics.

This paper is organized as follows: 
 In Section \ref{majorization}, we present some physical and mathematical review. The passive LON setup (BS) and its implication of computational complexity are briefly introduced, and then the concept of majorization and Schur concavity is explained. In Section \ref{Q}, the Boltzmann entropy $S_B$ of elementary quantum complexity is defined, which measures the inherent quantum complexity of a given input photon distribution. 
 In Section \ref{Rt}, we show that a more majorized input/output distribution implies a shorter classical runtime for the exact  simulation for a LON process.  Our observation fits all the known generalized algorithms for computing transition amplitudes of LON.
 In Section \ref{e}, we focus on the approximated simulation, showing that the runtime and universal additive error bound for the approximated algorithm is interpreted with the majorization and $S_B$. In Section \ref{discussion}, we summarize our results and discuss the future direction that can extend the current research.

\section{Physical and Mathematical preliminaries}\label{majorization}
In this section, we review fundamental backgrounds for our discussion, which are the computational complexity problem of the passive LON scattering process and the concept of majorization.

\subsection*{The computational complexity of passive LON scattering}

 \begin{figure}{t}
 	\centering
 	\begin{tikzpicture}
 	\draw (-0.6,0.2) node[left] {$|n_1\>$} ;
 	\draw (-0.6,0.6) node[left] {$|n_2\>$} ;
 	\draw (-0.6,1) node[left] {$|n_3\>$} ;
 	\draw (-0.5,2.6) node[left] {$|n_M\>$} ;
 	\draw[thick, dotted] (-1,1.4) -- (-1,2.2);
 	
 	%\draw (-0.4,0) rectangle (0.4,2.8);
 	
 	\draw[thick, ->] (0,0.2) -- (0.5,0.2) ;
 	\draw[thick, ->] (0,0.6) -- (0.5,0.6) ;
 	\draw[thick, ->] (0,1) -- (0.5,1) ;
 	\draw[thick, ->] (0,1.4) -- (0.5,1.4) ;
 	\draw[thick, ->] (0,1.8) -- (0.5,1.8) ;
 	\draw[thick, ->] (0,2.2) -- (0.5,2.2) ;
 	\draw[thick, ->] (0,2.6) -- (0.5,2.6) ;

\draw[thick, ->] (0,0.2) -- (3.9,0.2) ;
\draw[thick, ->] (3.5,0.6) -- (3.9,0.6) ;
\draw[thick, ->] (0,1) -- (3.9,1) ;
\draw[thick, ->] (0,1.4) -- (3.9,1.4) ;
\draw[thick, ->] (0,1.8) -- (3.9,1.8) ;
\draw[thick, ->] (0,2.2) -- (3.9,2.2) ;
\draw[thick, ->] (0,2.6) -- (3.9,2.6) ; 	
 	
 	\draw (5.5,0.2) node[left] {$|m_1\>$} ;
 	\draw (5.5,0.6) node[left] {$|m_2\>$} ;
 	\draw (5.5,1) node[left] {$|m_3\>$} ;
 	\draw (5.6,2.6) node[left] {$|m_M\>$} ;
 	\draw[thick, dotted] (5,1.4) -- (5,2.2);
 	
 	\filldraw[fill = cyan] (0,0.2)circle [radius=0.15];
 	\filldraw[fill = cyan] (0,0.6)circle [radius=0.15];
 	\filldraw[fill = cyan] (0,1)circle [radius=0.15];
 	\filldraw[fill = cyan] (0,1.4)circle [radius=0.15];
 	\filldraw[fill = cyan] (0,1.8)circle [radius=0.15];
 	\filldraw[fill = cyan] (0,2.2)circle [radius=0.15];
 	\filldraw[fill = cyan] (0,2.6)circle [radius=0.15];
 	
 	\filldraw[fill = brown] (0.5,0) rectangle (3.5,2.8) ;
 	\draw (1.7, 1.5)  node[right] {\textbf{U}} ; 
 	\draw (0.9, 1)  node[right] {linear optical } ;
 	\draw (1.3, 0.7)  node[right] {network } ;
 	
 	\filldraw[fill = purple] (4,0.2)circle [radius=0.15]; 
 	\filldraw[fill = purple] (4,0.6)circle [radius=0.15];
 	\filldraw[fill = purple] (4,1)circle [radius=0.15];
 	\filldraw[fill = purple] (4,1.4)circle [radius=0.15];
 	\filldraw[fill = purple] (4,1.8)circle [radius=0.15];
 	\filldraw[fill = purple] (4,2.2)circle [radius=0.15];
 	\filldraw[fill = purple] (4,2.6)circle [radius=0.15];
 	%\draw (3.6,0) rectangle (4.4,2.8);
 	\end{tikzpicture}
 	\caption{Multimode linear optical network with $N$ photons and $M$ modes, with arbitrary input/output photon distribution vectors $|\vec{n}\> =|n_1,n_2,\cdots ,n_M\>$ and $|\vec{m}\> =|m_1,m_2,\cdots ,m_M\>$. The linear optical operation $U$ is unitary, consisting of phase-shifters and beamsplitters. When $n_i \le 1$ and $m_i \le 1$ for all $i =1,\dots M$, this process is hard to simulate with classical computers \cite{Aaronson2011}.}
 	
 \end{figure}
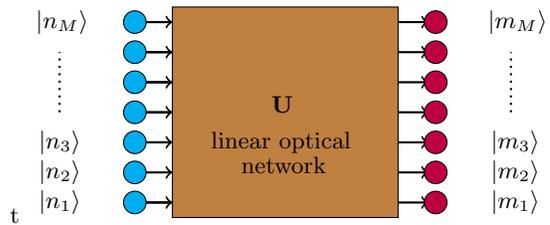

A passive LON scattering apparatus consists of beamsplitters, phase-shifters, and detectors (Fig. 1). $N$ photons are prepared for the initial state, which are places in $M$ modes of the apparatus. The input state is represented as 
\begin{align}
|\vn\> &= |n_1,n_2,\cdots, n_M\> \nn \\
        &= \frac{(\hat{a}_1^{\dagger})^{n_1} (\hat{a}_2^{\dagger})^{n_2}\cdots  (\hat{a}_M^{\dagger})^{n_M} }{\sqrt{\prod_{i=1}^{M} n_i !}}|\vec{0}\>
\end{align}
with $\sum_{i=1}^M n_i=N$ (note that $0\le n_i\le N$). We name the vector $\vn$ as photon distribution vector. The  beamsplitters and phase-shifters, which are reprsented with the unitary operator $\hat{U}$, scatter the photon distribution vector by the following map, 
\begin{align}
\hat{U}\hat{a}_i^{\dagger}\hat{U}^{\dagger} = \sum_{j=1}^M U_{ij}\hat{a}^{\dagger}_j,
\end{align} where $U_{ij}$ are the elements of a unitary matrix. Denoting the output state as $|\vm\> = |m_1,m_2,\cdots, m_M\>$, the scattering amplitude from $|\vn\> $ to $|\vm\>$ is expressed as
\begin{align}\label{perm}
\<\vm|\hat{U}|\vn\> = \frac{\textrm{Per}([U]_{\vn,\vm})}{\sqrt{\prod_{i=1}^{M}n_i!m_i!}},
\end{align}
where $\textrm{Per}(\cdot)$ means the matrix permanent, and $[U]_{\vn,\vm}$ is an $N \times N$ submatrix of $U$ ($n_i$ of the $i$-th row of $U$ and $m_j$ of $j$-th row of $U$ constitute $[U]_{\vn,\vm}$) \cite{Scheel2004}. Then the goal of BS is to sample from the transtion probability $|\<\vm|\hat{U}|\vn\>|^2$ for the given input condition  with $0\le n_i \le 1$. 

The connection between the scattering amplitude $\<\vm|\hU|\vn\>$ and matrix permanent (Eq.~\eqref{perm}) reveals crucial implications for the simulation of BS.
The computational complexity of matrix permanents is \#P-complete \cite{valiant1979complexity, Aaronson2011b}, which is considered a classically hard problem. One can expect that the computational complexity of matrix permanents is closely related to that of simulating BS. 
The efficient classical simulation of the scattering amplitude is a sufficient (but not necessary) condition for that of the transition probability (for a more detailed explanation on this point, see 2.2 of Ref. \cite{olson2018role}). It is demonstrated in Ref. \cite{Aaronson2011} that the exact BS problem is not efficiently solvable by classical Turing machines, unless P$^{\textrm{\#P}}$ = BPP$^{\textrm{NP}}$ (the polynomial hierarchy collapses to the third level, which seems unlikely for our current knowledge). This theorem implies that BS is a non-universal quantum computer that contains quantum supremacy with a relatively simple physical implementation.

The original BS setup discussed so far has the restriction that the photons are unbunched at the input modes ($0\le n_i \le 1$). On the other hand, when some $n_i$ are larger than $1$ (arbitrary photon distribution vector), the scattering process is equivalent to a submatrix permanent (Eq. \eqref{perm}), which would be simpler to simulate with classical Turin machines than the original case. The quantitative analysis about ``how simpler the simulation becomes according to the photon distribution'' is possible by focusing on the majorization pattern of the photon distribution vectors.

\subsection*{Majorization and Schur concavity}

Now we briefly introduce the definitions and properties of the majorization theory.
\begin{definition}
	(Majorization \cite{olkin2}) Let $\vec{x}= (x_1,x_2 \cdots , x_d)$ and $\vec{y}= (y_1,y_2 \cdots , y_d)$ be  nonincreasing sequences of positive real numbers. Then $\vec{x}$ is majorized by $\vec{y}$ (which is expressed as $\vec{x}$ $\prec$ $\vec{y}$) if the following relations are satisfied:
	\begin{align}
	\forall k \quad(1\le k <d):\quad   \sum_{i=1}^{k}x_i \le \sum_{i=1}^k y_i
	\end{align}
	and
	\begin{align}
	 \sum_{i=1}^{d}x_i = \sum_{i=1}^d y_i.
	\end{align}
\end{definition}
%\begin{definition}
%	$Maj[\vec{x},\vec{y}]$ is the majorizing vector over the other, i.e., if $\vec{x} \prec \vec{y}$ then $Maj[\vec{x},\vec{y}] = \vec{y}$.
%\end{definition}

\begin{definition}
	(Discrete majorization \cite{ruch, sauerbrei}) Discrete majorization is the majorization relation between vectors with positive integers. This relation is determined by way of partitioning a given integer $N$.  Each partition with a nonincreasing sequance of non-negative integers is represented by a Young diagram (see appendix \ref{young}). 
\end{definition}
A distinctive feature of discrete majorization in contrast to continuous majorization is that the number of partitioning a given integer $N$ is finite. Therefore, we can construct a complete array of distribution vectors along the majorization with $N$ (more specific explanation and examples are given in Appendix \ref{young}). 
Since this research deals with the distribution of $N$ photons in $M$ modes, our discussion is limited to discrete majorization.

\begin{definition}
(Schur-convex and Schur-concave functions \cite{olkin2})
	A real-valued function $f$ of a positive real-valued $n$-dimensional vector
	is said to be Schur-convex if
	\begin{align}
	\vec{x} \prec \vec{y} \Rightarrow f(\vec{x}) \le f(\vec{y}). 
	\end{align}
	It is strictly Schur-convex if $f(\vec{x}) < f(\vec{y})$ whenever $\vec{x} \prec \vec{y}$.
	
	Similarly, $f$
	is Schur-concave  if
	\begin{align}
	\vec{x} \prec \vec{y} \Rightarrow f(\vec{x}) \ge f(\vec{y}) 
	\end{align}holds.
\end{definition}

Note that the inverse is not true, i.e., $f(\vec{x}) \le f(\vec{y})$ for a Schur convex function does not guarantee $\vec{x} \prec \vec{y}$. \\
\\
We enumerate some Schur-concave functions that are important for the later discussion: 
\begin{align}
\label{concave}
&X_k(\vec{n}) = \sum_{i_1 < i_2 < \cdots < i_k=1}^{d} n_{i_1}\cdots n_{i_k} \nn \\
& \qquad \textrm{(the elementary symmetric function, } 1\le k \le d) \nn \\
&\a_{\vec{n} } = \textrm{(the number of nonzero elements of } \vec{n} ), \nn \\
&Q_{\vec{n}}=\frac{N!}{\prod_{i=1}^d n_i!}, \qquad v_{\vec{n}} = \prod_{i=1}^{d}\sqrt{\frac{n_{i}!}{n_i^{n_i}} }, \nn \\
&H(\vec{n})=-\sum_{i=1}^{d}\frac{n_i}{N} \log_2 \frac{n_i}{N}.
\end{align}
Among these, $Q_{\vec{n}}$ and $H(\vec{n})$ are strictly Schur-concave. The functions in Eq.~\eqref{concave} turn out to be components of  $\cT$ (the runtime of the generalized classical algorithm for calculating the permanent) and $\cE$ (the additive error bound for the approximated permanent estimator), which measure the computing power of boson sampling systems.

\section{elementary quantum complexity and majorization}\label{Q}

%It is usually described that the interferometer in the boson sampling is a passive linear apparatus with non-interacting photons. But the Hong-Ou-Mandel effect \cite{hong1987} creates an effective interaction at each beamsplitter, which is originated from the exchange symmetry of identical photons (see Appendix A of \cite{Aaronson2011} for a sketchy idea on this matter). In other words, the indistinguishability of bosons is the quantum resource that generates the computational complexity of boson sampling.  
It is conjectured that the indistinguishability of photons is responsible for the computational complexity of linear optics. Aaronson and Arkhipov argued in Section 1.1 of \cite{Aaronson2011} that the exchange symmetry of identical bosons creates an \emph{effective entanglement} (a kind of artificial entanglement), which would be the origin of the computational complexity in linear optics. On the other hand, when all the particles are distinguishable, the Monte Carlo method that counts each particle independently in time is efficient enough to simulate the sampling process \cite{tichy2014interf,jerrum2004}. This analysis supports the assumption that indistinguishability is the quantum computing resource of linear optics.

Indistinguishability is an intrinsic quantum property that imposes a sharp distinction between quantum and classical particles. 
Meanwhile, we should consider the possibility of a collective characteristic that determines the actual emergence of the indistinguishability effect in the entire system.   
For example, Killoran et al.~\cite{killoran2014extracting} pointed out that the identical particles in LON can create entanglement, and the amount of the entanglement depends on the distribution pattern of particles in modes. In other words, the effective entanglement mentioned in Ref.~\cite{Aaronson2011} caused the extracted entanglement in Ref.~\cite{killoran2014extracting}. From the viewpoint of the computational complexity, one can expect that the distribution pattern of photons in modes is one of the decisive factors for the computational cost of simulating the passive LON scattering processes.

In this section, we develop this idea using the majorization of particle distributions in multi-modes.
Our discussion here is restricted to the input states, which will extend to the entire system including unitary operations of the interferometer in the later sections. The analysis includes the introduction of physical quantities, elementary quantum complexity and the Boltzmann entropy of elementary quantum complexity.

%Among various kinds of quantumness, One of the most distinctive features of quantum particles is the indistinguishability. By this property, a distribution of quantum particles presents smaller number of cases than that of classical particles. Or we can consider the quantum distribution as a superposition of many classical distributions.

\subsection*{Partition of quantum/classical particles and elementary quantum complexity $Q_{\vec{n}}$}

We first compare the possible distribution number of $N$ identical particles with that of $N$ distinguishable particles, and show that majorization determines the ratio between them. We name the ratio \emph{the elementary quantum complexity}.

Partitioning two balls into two boxes (Fig. \ref{even}) is the simplest example for our discussion.  In the classical case, the balls and boxes are all distinguishable. For the case of identical particles (LON for our setup), boxes (modes) are distinguishable  but balls (identical particles) are indistinguishable. Putting each ball into a different
box, we have two possible ways of allocation in the distinguishable setup, but only one way in the indistinguishable setup, i.e.,
\begin{align}
\frac{\textrm{(Number of allocations for distinguishable balls)} }{\textrm{(Number of allocations for indistinguishable balls) } } =2.
\end{align} 

\begin{figure}
	\centering
	\begin{tikzpicture}
	\draw (0,1) rectangle (1,2);

	\draw  (2,1) rectangle (3,2);
	\draw (0,0) rectangle (1,-1);
	\draw  (2,0) rectangle (3,-1);
	\filldraw[fill = orange] (0.5,1.5) circle [radius=0.1];
	\filldraw[fill = cyan] (2.5,1.5) circle [radius=0.1];
	\filldraw[fill = orange] (2.5,-0.5) circle [radius=0.1];
	\filldraw[fill = cyan] (0.5,-0.5) circle [radius=0.1];
	
	\draw[very thick] (4,2.5) -- (4,-1);
\draw  (-0.2,0.5) -- (3.2,0.5);
	\draw (0.5,2.5)  node[right] {Distinguishable}; 
	\draw (5,2.5)  node[right] {Indistinguishable}; 	
	\draw (5,0) rectangle (6,1);
	\draw  (7,0) rectangle (8,1);
	\filldraw[fill = black] (7.5,0.5) circle [radius=0.1];
	\filldraw[fill = black] (5.5,0.5) circle [radius=0.1];
	\end{tikzpicture}
	\caption{Two balls evenly distributed in two boxes (the left is distinguishable and the right is indistinguishable).}
	\label{even}
\end{figure}
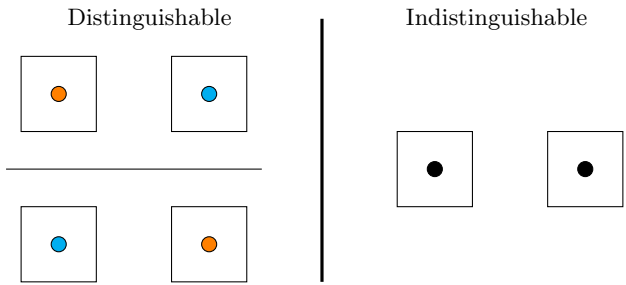

However, this difference disappears when two balls are in the same box (Fig. \ref{odd}). Now both distinguishable and indistinguishable case have two possibilities, thus we have
\begin{align}
\frac{\textrm{(Number of allocations for distinguishable balls)} }{\textrm{(Number of allocations for indistinguishable balls) } } =1.
\end{align}
Therefore, we can say that when two identical particles are unevenly distributed they have no difference with distinguishable particles from the viewpoint of number counting.
We define the ratio as $Q_{\vec{n}}$, where $\vec{n}$ represents the distribution vector for a partition, where $n_i$ is the number of particles in the $i$th box. $\vec{n}$ is given by  $(1,1)$ in Fig. \ref{even} and $(2,0)$ in Fig. \ref{odd}.
Using the concept of majorization, we have
\begin{align}
\label{q2concave}
\vec{n} \prec \vec{m} \Longrightarrow Q_{\vec{n}} > Q_{\vec{m}},
\end{align}
with $\vec{n}=(1,1)$ and $\vec{m}=(2,0)$.

\begin{figure}
	\centering
	\begin{tikzpicture}
	\draw (0,1) rectangle (1,2);
	\draw  (2,1) rectangle (3,2);
	\draw (0,0) rectangle (1,-1);
	\draw  (2,0) rectangle (3,-1);
	\filldraw[fill = orange] (0.3,1.5) circle [radius=0.1];
	\filldraw[fill = cyan] (0.7,1.5) circle [radius=0.1];
	\filldraw[fill = orange] (2.3,-0.5) circle [radius=0.1];
	\filldraw[fill = cyan] (2.7,-0.5) circle [radius=0.1];
	
		\draw[very thick] (4,2.5) -- (4,-1);
	    \draw  (-0.2,0.5) -- (3.2,0.5);
	    \draw (4.8,0.5) -- (8.2,0.5);

	\draw (0.5,2.5)  node[right] {Distinguishable}; 
\draw (5,2.5)  node[right] {Indistinguishable}; 	
	\draw (5,1) rectangle (6,2);
	\draw  (7,1) rectangle (8,2);
	\filldraw[fill = black] (5.3,1.5) circle [radius=0.1];
	\filldraw[fill = black] (5.7,1.5) circle [radius=0.1];
	\draw (5,0) rectangle (6,-1);
	\draw  (7,0) rectangle (8,-1);
	\filldraw[fill = black] (7.3,-0.5) circle [radius=0.1];
	\filldraw[fill = black] (7.7,-0.5) circle [radius=0.1];
	\end{tikzpicture}
	\caption{Two balls unevenly distributed in two boxes (the left is distinguishable and the right is indistinguishable).}
	\label{odd}
\end{figure}
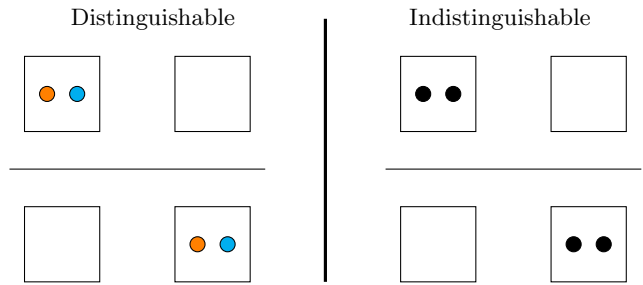

 The argument for 2 balls in 2 boxes can be generalized to the case of $N$ arbitrary balls in $M$ different boxes in a similar manner. We can quantify $Q_{\vec{n}}$ in general for which Eq. \eqref{q2concave} still holds.

\begin{definition}
The elementary quantum complexity $Q_{\vn}$, which evaluates the quantum aspect of identical particle systems from the viewpoint of the partition arrangement, is defined as
\begin{align}
 Q_{\vn} = \frac{N!}{\prod_{i}(n_i)!}. \nn
\end{align}

\end{definition}
Then the following theorem holds:
\begin{theorem}
	\label{1}
	For a distribution vector $\vec{n}$ of $N$ indistinguishable particles, there corresponds to $Q_{\vn}$ ways of  arranging $N$ distinguishable particles. If $\vec{n} \prec \vec{m}$, then $Q_{\vec{n} } $ $>$ $Q_{\vec{m}}$.  
\end{theorem}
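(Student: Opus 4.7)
The first claim, that $Q_{\vec n}$ counts the number of arrangements of $N$ distinguishable particles into boxes with occupancies $n_1,\dots,n_M$, is a standard multinomial counting exercise. The plan is to pick which $n_1$ of the $N$ labelled particles go in box $1$, then which $n_2$ of the remaining $N-n_1$ go in box $2$, and so on, giving
\begin{equation*}
\binom{N}{n_1}\binom{N-n_1}{n_2}\cdots \binom{N-n_1-\cdots-n_{M-1}}{n_M} \;=\; \frac{N!}{\prod_{i=1}^M n_i!}\;=\;Q_{\vec n}.
\end{equation*}
This is a direct computation and needs no further machinery.

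For the majorization part, the plan is to reduce strict Schur-concavity of $Q$ to its behaviour under a single elementary transfer. Recall that whenever $\vec n \prec \vec m$ and $\vec n$ is not a permutation of $\vec m$, one can pass from $\vec m$ to (a permutation of) $\vec n$ by a finite sequence of Robin-Hood transfers: pick indices $i,j$ with $m_j - m_i \ge 2$ and replace $(m_i,m_j)$ by $(m_i+1,m_j-1)$. It therefore suffices to show that each such transfer strictly increases $Q$. A direct computation of the ratio gives
\begin{equation*}
\frac{Q_{\vec m'}}{Q_{\vec m}} \;=\; \frac{m_i!\,m_j!}{(m_i+1)!\,(m_j-1)!} \;=\; \frac{m_j}{m_i+1},
\end{equation*}
where $\vec m'$ denotes the post-transfer vector. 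Since $m_j - m_i \ge 2$, this ratio is strictly greater than $1$, and iterating along the chain of transfers from $\vec m$ down to $\vec n$ yields $Q_{\vec n} > Q_{\vec m}$.

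The one subtle point I would be careful about is the strictness. A naive Robin-Hood transfer with $m_j - m_i = 1$ merely permutes the multiset and so should not appear in a sequence realising a \emph{strict} majorization; I would make this explicit by recalling that such a step leaves both the sorted vector and $Q$ unchanged, and that any decomposition of a genuine majorization $\vec n \prec \vec m$ with $\vec n \not\sim \vec m$ must contain at least one transfer with gap $\ge 2$. Once this is in place, strictness propagates through the full chain, completing the proof. The potential obstacle — and the only reason to worry at all — is making sure the decomposition into elementary transfers stays inside the nonnegative-integer lattice and terminates at a permutation of $\vec n$; this is precisely the content of Muirhead's classical lemma on integer majorization, which I would invoke without reproof.
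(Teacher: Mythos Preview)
Your proof is correct. The counting argument via telescoping binomials is essentially the same as the paper's Young-tableau count in the appendix (both produce the multinomial coefficient), so there is no real difference there.

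For the Schur-concavity part, your approach differs from the paper's. The paper simply invokes the known fact that $\prod_i n_i!$ is strictly Schur-convex, citing Marshall--Olkin, and concludes in one line. You instead give a self-contained proof via Robin-Hood transfers, computing the ratio $Q_{\vec m'}/Q_{\vec m}=m_j/(m_i+1)>1$ explicitly and handling the strictness issue by separating gap-$1$ (permutation) steps from gap-$\ge 2$ steps. Your route is more elementary and does not rely on an external reference, at the cost of a few extra lines and the need to invoke Muirhead's lemma for the integer decomposition; the paper's route is shorter but treats the Schur-convexity as a black box. Both are standard and equally valid for this statement.
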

\begin{proof}
	Let us consider one vector $\vec{n}= (n_1, n_2,\cdots , n_N)$ of $N$ identical particles. If the particles are replaced with distinguishable  ones, the possible number of distinctive  arrangements are determined by which particles of number $n_i$ are included in each partition, irrespective of the order. The number of such arrangements are  $\frac{N!}{\prod_{i}(n_i) !}$ (a rigorous proof for this statement is given in Appendix \ref{young} using the Young diagram representation of particle distributions), which corresponds to $Q_{\vec{n}}$.  Since $\prod_i (n_i)!$ is strictly Schur convex \cite{olkin2}, we see that $\vec{n} \prec \vec{m}$ results in $Q_{\vec{n} } $ $>$ $Q_{\vec{m}}$.
\end{proof}

Theorem \ref{1} provides a standpoint to understand the classical hardness of simulating boson sampling with the majorization of input states before unitary operations. 
The physical implication of the theorem is that \emph{a set of $N$ identical particles with a distribution vector $\vec{n}$ is a package of $\frac{N!}{\prod_{i}(n_i) !}$ classical states of the same distribution $\vec{n}$}. This package is delivered to the linear optical network and unpacked in the detector. And we can predict that the hardness of the sampling problem lies in the hardness of unpacking (a simplest example to understand this viewpoint is given in Appendix \ref{example}). 
The decrease of $Q_{\vec{n}}$ along the majorization of $\vec{n}$ means the decrease of the number of classical states per quantum distribution, which results in a simpler unpacking. Thus the lesson of Theorem \ref{1} is that \emph{the more evenly $N$ photons are distributed, the more strongly the indistinguishability of photons adds complexity to the system.} The analysis of later sections includes the specific evidence of this statement.

Theorem \ref{1} also explains the underlying principle of the computational complexity according to the photon density in the modes. 
First, when the photon density is very low ($N \ll M$), the BS process is classically hard to simulate \cite{Aaronson2011}. 
In this limit, the photon population of each mode becomes not larger than one (equivalent to the minimal majorization of photons): it results in the maximal $Q_\vn$.
Second, when the boson density is high ($N \gtrsim M$), the semiclassical samplings (keeping only $\cO(N^{-1})$ terms of the probability amplitude) are efficient \cite{urbina2016,shchesnovich2013}.  This implies that the highly majorized photon distributions under this condition have small $Q_\vn$.

\subsection*{Boltzmann entropy of elementary quantum complexity}

We can find that the former analysis is closely related to the Gibbs ensemble (see, e.g., 5.3 of \cite{bowley1999}). 
It consists of $N$ distinguishable systems, each of which is in one of the quantum states $\p_{i}$ ($1\le i \le M$). With $n_i$ systems in  each $\p_i$ state ($\sum_i n_i =N$), the number of arrangements is given by $N!/ \prod_{i=1}^M (n_i)!$, which is equal to $Q_{\vec{n}}$. Here, $N$ systems correspond to $N$ classical particles and $M$ quantum states to $M$ modes in LON. From this analogy, we define \emph{the Boltzmann entropy of elementary quantum complexity}:

%\begin{center}
%	\begin{tabular}{| l | l |}
%		\hline
%		\textbf{Gibbs ensemble} & \textbf{BS input}     \\ \hline
%	     $M$-quantum states & $M$-modes	 \\ \hline
%		$N$-identical systems & $N$-identical photons \\ \hline
%        $W$  & $Q_\vn$
%		\\ \hline
%	\end{tabular}
	
%\end{center}

\begin{definition}\label{SBQ} The Boltzmann entropy of elementary quantum complexity $S_B(\vec{n})$ for a distribution vector $\vec{n}$ is defined as
	\begin{align}
    \label{boltzmann}
	S_{B}^q(\vec{n}) \equiv \log_2 Q_{\vec{n}} = \log_2 (N!) -\sum_{i=1}^M \log_2 (n_i !),
	\end{align}
	which measures the amount of distinct classical arrangements for the given quantum distribution.   
\end{definition}
Additionally, for later convenience, we also define \emph{the Shannon entropy of elementary quantum complexity} $H$ here:
\begin{definition} The Shannon entropy of elementary quantum complexity $H(\vec{n})$ for a distribution vector $\vec{n}$ is defined as
	\begin{align}
    \label{shannon}
	H(\vec{n}) \equiv \Big( -\sum_{i=1}^M \frac{n_i}{N}\log_2 \frac{n_i}{N}\Big).
	\end{align}
\end{definition}
Regarding the standard interpretation of Shannon entropy as a measure of unpredictability (unknown average information content) for a given physical system (see, e.g., Chapter 1 of \cite{timpson2013}), we can state that $H(\vn)$ in LON measures the unknown information of the photon distribution per mode.
One can see that 
$S_B^q$ goes to $NH$  when all nonzero $n_i$ become large, using the Stirling approximation.

These two entropies are used in Section \ref{e} to analyze the additive error for the approximated computation of transition amplitudes.
\\

\paragraph*{Remark.--} 
So far, we have compared completely indistinguishable particles with completely distinguishable ones. On the other hand, in the realistic setup with imperfect detectors, one should consider the effect of partial distinguishablity \cite{shchesnovich2015, tichy2015} which is the intermediate transition between quantum and classical particles.
The partial distinguishablity of particles is represented with a distinguishabilty matrix, the elements of which can vary continously.     
%Before closing this section, it is worth mentioning that the particle indistinguishability can be given as a continuous parameter for the computational complexity analysis in boson sampling \cite{shchesnovich2015, tichy2015}. 
%When the partial distinguishability of bosons is introduced as a  parameter, the two cases are the two opposite limits of boson distinguishability. 
Hence, it is possible to analyze the behavior of elementary quantum complexity in boson sampling along the two variables, distinguishability (continuous) and majorization (discrete). The role of partial distinguishability in BS when one real parameter can determine the distinguishability is analyzed in Ref. \cite{renema2017efficient}, and we push in this direction further \cite{chinpartial} from the viewpoint of coherence resource theory.

\section{Classical algorithm and majorization}\label{Rt}

%\blue{The computational hardness of BS is based on the computational complexity of matrix permanents, which is classified as a \#P-hard problem~\cite{Aaronson2011b}.The transition amplitude of a given input state to an output state is proportional to the permanent of a submatrix of a unitary matrix $U$ representing passive LON. When all input and output modes contain no more than one photon, the best known classical algorithm for computing the amplitude is Ryser's formula \cite{Ryser1963}. It requires an exponential runtime, which scales as $\mathcal{O}(2^N N^2)$ ($N$ is the photon number for the LON case). 	This implies that the classical computation of permanent demands exponential classical resources as the problem size grows. }

From the perspective of Theorem \ref{1}, we expect that the majorization of input and output distribution affects the simulation process of passive LON. In this section we show that exact classical algorithms with shorter runtime can be found when systems have more majorized input and output photon distributions.

For single photon systems (no more than one photon per mode both in input/output states) the best algorithm to simulate the scattering amplitude is given by Ryser's formula \cite{Ryser1963}. It requires an exponential runtime, which scales as $\mathcal{O}(2^N N^2)$ ($N$ is the photon number for the LON case). 
This implies that the classical computation of permanent demands exponential classical resources as the problem size grows. On the other hand,
when input/output photon distribution vectors are arbitary, the generalized Ryser formula can be found (Refs.~\cite{aaronson2012,huh2016computing,yung2016}).

In this section, we show that the behavior of the runtimes for the generalized permanent computing algorithms introduced in Refs.~\cite{aaronson2012,huh2016computing,yung2016} 
%(the generalizations of Ryser's and Glynn's algorithms \cite{Ryser1963, Glynn2010, Glynn2013}) 
follows our prediction. We discuss their relation with the elementary quantum complexity defined in Section \ref{Q}. The linear operation we deal with in this work can be represented as a nontrivial unitary matrix $U$. %Additionally, for approximated estimates with additive error bounds meaningful (this will be considered in Section \ref{e}),  $U$ should be close to the identity so that Per$(U)$ (the permanent of $U$) is close to 1 \cite{Aaronson2011}.
\\
%In the linear optical network with the unitary operation $U$, the transition amplitude between the input state $|n_1, n_2, ..., n_M\>\equiv |\vec{n}\>$ and the output state $|m_1, m_2, ..., m_M\> \equiv |\vec{m}\>$ is given by
%\begin{align}
%\<\vec{m}| \hat{R}(\ln U^{\mathrm{t}})|\vec{n}\>
%=\left(\prod_{k=1}^{M}\sqrt{n_{k}! m_{k}!}\right)^{-1}\left(\mathrm{Per}\left([U]_{\vec{n},\vec{m}}\right)\right) \, , \label{eq:permanent}
%\end{align}
%where $[U]_{\vec{n},\vec{m}}$ is a $N\times N$ submatrix of $U$ with $m_k$-time repetition of the $k$-th column of $U$ and $n_l$-time repetition of the $l$-th row of $U$ ($\sum_{k}^{M}n_{k}=\sum_{k}^{M}m_{k} =N$ and the mode number is $M$) \cite{Aaronson2011}. $\hat{R}$ satisfies \cite{Ma1990}
%\begin{align}
%\hat{R}(\mathrm{ln}U^{\mathrm{t}})\hat{a}_{k}^{\dagger}\hat{R}(-\mathrm{ln}U^{\mathrm{t}})=\sum_{l=1}^{M}U_{kl}\hat{a}_{l}^{\dagger},
%\end{align}
%where 
%$(\hat{a}_{1},\ldots,\hat{a}_{M})^{\mathrm{t}} \equiv \vec{a}$ and  $(\hat{a}_{1}^{\dagger},\ldots,\hat{a}_{M}^{\dagger})^{\mathrm{t}} \equiv \vec{a}^\dagger$ are the boson annihilation and creation operators.

First we briefly introduce the formulae of Ryser's \cite{Ryser1963} and Glynn's \cite{Glynn2010,Glynn2013} for $N\times N$ matrix permanent computation and their generalized formulae \cite{aaronson2012,huh2016computing,yung2016} for the permanents of matrices that have repeated rows and columns. 

The Ryser's formula \cite{Ryser1963} for $N\times N$ matrix $U$ is given by
\begin{align}
\textrm{Per}(U) = \sum_{\vec{x} \in \{0,1\}^N}(-1)^{\sum_{i=1}^N x_i} \prod_{i=j}^N \Big(\sum_{k=1}^N U_{jk}x_{k} \Big),
\end{align}
where $\vec{x} = (x_1,x_2,\ldots x_N)$.
Glynn's formula \cite{Glynn2010,Glynn2013} with the random variable expectation  is given by
\begin{align}
\textrm{Per}(U) = \frac{1}{2^{N-1}}\sum_{\vec{x}\in \{-1,1\}^N}(\prod_{i=1}^N x_i)\prod_{j=1}^N\Big(\sum_{k=1}^N U_{jk}x_k \Big).
\label{Glynn's}
\end{align}
Now the summation of $\vec{x}$ is over $\vec{x} \in \{ -1,1\}^N$, or $\vec{x} \in \mathcal{X} \equiv \mathcal{R}[2]\times \cdots \times \mathcal{R}[2]$, where $\mathcal{R}[i]$ is a set that consists of the $i$th root of unity. The runtime of both formulae for exactly computing the permanent is equal and $\cO(2^N N^2)$ (the improvement to $\cO(2^N N)$ is possible by using Gray code order).  	 

When $U$ has repeated rows $or$ columns, and the $i$th column (or row) is repeated $n_i$-times, which correspond to the case when the input photon distrition vector is $\vn$,  Eq. \eqref{Glynn's} is generalized to \cite{aaronson2012} 
\begin{align}
\textrm{Per}(U) =\sum_{\vec{z}\in \mathcal{X}} v_{\vec{n}}^2(\prod_{i=1}^{N} \bar{z}_i^{n_i}) \prod_{j=1}^N \Big( \sum_{k=1}^N U_{jk}z_k\Big),
\label{genGlynn's}
\end{align}
where $\mathcal{X} \equiv \mathcal{R}[n_1+1]\times \cdots \times \mathcal{R}[n_N+1]$ and $v_{\vn}\equiv \sqrt{\prod_{i=1}^N(n_i!/n_i^{n_i})}$. The classical runtime for the above algorithm is 
\begin{align}
\cO\big[\prod_{k=1}^N(n_k+1)\a_{\vec{n}}N\big],
\label{ahrt}
\end{align}
where $\a_{\vec{n}}$ is defined in \eqref{concave}.

For the more general case when $U$ has repeated rows $and$ columns, and the $i$th column is repeated $n_i$-times and $j$th column is repeated $m_j$-times, which correspond to the case when the input/output photon distrition vectors are $\vn$ and $\vm$,  there exist two algorithms known so far.
One of them is \cite{yung2016}
\begin{align}
\textrm{Per}(U) =\sum_{\vec{z}\in \mathcal{X}} v_{\vec{n}}^2(\prod_{i=1}^{N} \bar{z}_i^{n_i}) \prod_{j=1}^N \Big( \sum_{k=1}^N U_{jk}z_k\Big)^{m_j},
\label{mgGlynn's}
\end{align} 
where $\mathcal{X}$ and $v_{\vn}$ are the same with those in Eq. \eqref{genGlynn's}. Note that the role of $\vm$ reveals from the order of the second parenthesis. This is the direct generalization of \eqref{Glynn's} and \eqref{genGlynn's}.
The other is \cite{huh2016computing}
\begin{align}
\mathrm{Per}\left(U\right)=&
\frac{1}{2^{N}}\sum_{v_{1}=0}^{n_{1}}\cdots\sum_{v_{M}=0}^{n_{M}}(-1)^{N_{v}}
\begin{pmatrix}
n_{1}\\v_{1}
\end{pmatrix}
\cdots
\begin{pmatrix}
n_{M}\\v_{M}
\end{pmatrix} \nn \\
& \qquad\qquad\qquad \times 
\prod_{k=1}^{M}[\sum_{j=1}^M(n_j-2v_j) U_{jk}]^{m_{k}},
\label{eq:permseries2}
\end{align}
which is obtained using a multivariate series expansion introduced by Kan \cite{kan:2007}.

The minimal runtime for the algorithms \eqref{mgGlynn's} and \eqref{eq:permseries2}  are equal, which is counted in Ref. \cite{huh2016computing}. The minimal runtime, denoted as $\cT_{min}(\vec{n},\vec{m})$, is given by
\begin{align}
\cT_{min}(\vec{n},\vec{m})=
\mathcal{O}\Big[ \min\Big( \prod_{i=1}^{M}(n_i+1), \prod_{j=1}^{M}(m_j+1) \Big) \a_{\vec{n} } \a_{\vec{m} }  \Big],
\label{rt} 
\end{align}
where $\min(a,b)$ is the smaller value between $a$ and $b$, and ($\vn$, $\vm$) are distribution vectors defined as in Section \ref{Q}, which holds by the relation between the matrix permanent and transition amplitude of LON.
A special case is when $\vm = \s(\vec{1}_N, \vec{0}_{M-N})$, i.e., $m_i$ are not not greater than $1$ for all $i$ (here $\s(\vec{v})$ represents any permutation vector of $\vec{v}$).   For the case, since $\min\Big(\prod_{i=1}^{M}(n_i+1), \prod_{j=1}^{M}(m_j+1) \Big) = \prod_{i=1}^N (n_i+1)$ and $\a_{\vm}=N$, it is straightforward to see that $\cT_{min}(\vn,\vm)$ is equal to \eqref{ahrt}. A more special case is when  $\vn =  \vm = \s(\vec{1}_N, \vec{0}_{M-N})$, i.e., both $n_i$ and $m_i$ are not greater than $1$ for all $i$. Then $\cT_{min}(\vec{n},\vec{m}) = 2^N N^2$, which is computationally as expensive as that of Ryser's and Glynn's formulae.\\

The relation of majorization with $\cT_{\min}(\vec{n}, \vec{m})$ is clear from the following identity \cite{huh2016computing}:
\begin{align}
\prod_{i=1}^{M}(n_i+1) = \sum_{k=0}^{\a_{\vec{n}} }X_k(\vec{n}) \qquad (X_0\equiv 1)
\end{align}
where $X_k(\vec{n})$ are the elementary symmetric polynomials, defined in Eq. \eqref{concave}. Then Eq. \eqref{rt} is reexpressed as
\begin{align}
\cT_{\min}(\vn,\vm) = \cO\Big[\min\Big(\sum_{k=1}^{\a_\vn} X_{k}(\vn),  \sum_{k=1}^{\a_\vn} X_{k}(\vm) \Big)\a_\vn\a_{\vm} \Big].
\label{rtr}
\end{align}
 Since $\a_{\vec{n} }$ is a Schur-concave function (Eq.\eqref{concave}) and $\sum_{k=0}^{\a_{\vn}} X_{k}(\vn)
$ is a strictly Shur-concave function (this is clear from the fact that $X_{k}(\vn)$ is strictly Schur-concave for $k\ge 2$ \cite{olkin2}), one can notice that the scale of $\cT(\vec{n},\vec{m})$ varies along the majorization of the input/output distributions. \emph{The runtime of two systems with different photon distributions can be compared by examining their degrees of majorization.} In Ref.~\cite{huh2016computing}, the summation $\sum_{k=0}^{\a_{\vn}} X_{k}(\vn)$ is defined as \emph{the Fock state concurrence sum}, which is from the functional relation between the elementary symmetric polynomials and the generalized concurrence \cite{gour2005,chin2017gcc}.

We first consider the algorithm by Aaronson and Hance when $\vn$ or $\vm$ is equal to $\s(\vec{1}_{N}, \vec{0}_{M-N})$. Let us choose $\vm =\s(\vec{1}_{N}, \vec{0}_{M-N})$. Then the runtime \eqref{ahrt} is reexpressed as
\begin{align}
\cT_{\min}(\vn, \vm)  = \cO \Big[(\sum_{k=0}^{\a_{\vn} } X_k)\a_{\vn}N   \Big].
\end{align}
Hence, for two different input distribution $\vn_1$ and $\vn_2$ for this case, we have the following inequality:
\begin{align}
 \vec{n}_1 \prec \vec{n}_2  \Longrightarrow \cT(\vec{n}_1,\sigma(\vec{n}_1) ) > \cT(\vec{n}_2,\sigma(\vec{n}_2) ) 
\end{align}
by the strict Schur concavity of $\cT_{\min}(\vn, \vm).$

A slightly more general (but equaly simple) case is when  $\vec{m}$ is a permutation vector of $\vec{n}$ (equally majorized), i.e., $\vec{m} = \sigma(\vec{n})$. Then we have
\begin{align}
\label{rtperm}
\cT(\vec{n},\sigma(\vec{n}))_{\min}=
\mathcal{O}\Big[ (\sum_{k=0}^{\a_{\vn}}X_{k}(\vn)) \a_{\vec{n} }^2 \Big],
\end{align}
and
\begin{align}
 \vec{n}_1 \prec \vec{n}_2  \Longrightarrow \cT_{\min}(\vec{n}_1,\sigma(\vec{n}_1) ) > \cT_{\min}(\vec{n}_2,\sigma(\vec{n}_2) ).
\end{align}

Next, if $\vec{n} \prec \vec{m}$ (the input distribution is majorized by the output distribution), the runtime is given by
\begin{align}
\cT_{\min}(\vec{n},\vec{m}) = \mathcal{O}\Big[ \Big(\sum_{k=1}^{\a_{\vm}}X_{k}(\vm)\Big)  \a_{\vec{n} } \a_{\vec{m} }  \Big].
\end{align}

On the other hand, as we can see in Eq. \eqref{rtr}, if $\vec{n} \succ \vec{m}$, the role of two distribution vectors are reversed. With this input/output symmetry, which comes from the unitarity of the linear optical network operations, the runtime is denoted hereafter by
\begin{align}
\label{rtredef}
\cT_{\min}(\vec{n} \preceq \vec{m}) \equiv \mathcal{O}\Big[\Big( \sum_{k=0}^{\a_{\vec{m} } } X_k(\vec{m}) \Big) \a_{\vec{n} } \a_{\vec{m}} \Big],
\end{align} 
 where the expression $\vec{n} \preceq \vec{m}$ means that $\vec{n}$ is always majorized by $\vec{m}$ or a permutation vector of $\vec{m}$ ($\vm$ is not majorized by $\vn$ in other words), irrespective of the input/output distinction.

With Eq. \eqref{rtredef}  
we can compare two sets of input/output photon distributions, $(\vec{n}_1\preceq \vec{m}_1)$ and $(\vec{n}_2\preceq \vec{m}_2)$.
%For example,
%\begin{align}
%\vec{n}_1 \preceq \vec{m}_1 \preceq \vec{n}_2 \preceq \vec{m}_2 \quad\Longrightarrow\quad \cT(\vec{n}_1 \preceq \vec{m}_1) \ge \cT(\vec{n}_2 \preceq \vec{m}_2)
%\end{align} and so on.
The inequalities given by possible majorization relations between  $(\vec{n}_1 \preceq \vec{m}_1)$ and $(\vec{n}_2 \preceq \vec{m}_2)$ are listed in the table below. 
\begin{center}
	\begin{tabular}{|  p{3.8cm} | l |}
		\hline
 Majorization relation &  Runtime relation	 \\ \hline
 $\vec{n}_1 = \sigma(\vec{n}_2),$ $\vec{m}_1 = \sigma(\vec{m}_2)$ &
		$\cT_{\min}(\vec{n}_1\preceq \vec{m}_1) =\cT_{\min}(\vec{n}_2\preceq \vec{m}_2)$  \\ \hline
		$\vec{n}_1 \preceq \vec{m}_1 \preceq \vec{n}_2 \preceq \vec{m}_2$ & $\cT_{\min}(\vec{n}_1\preceq \vec{m}_1) \ge \cT_{\min}(\vec{n}_2\preceq \vec{m}_2)$
        \\ \hline
		$\vec{n}_1\preceq \vec{n}_2 \preceq \vec{m}_1 \preceq \vec{m}_2$ &  $\cT_{\min}(\vec{n}_1\preceq \vec{m}_1) \ge \cT_{\min}(\vec{n}_2\preceq \vec{m}_2)$ \\ \hline 
 $\vec{n}_1\preceq \vec{n}_2 \preceq \vec{m}_2 \preceq \vec{m}_1$ & Not determined \\ \hline
		
	\end{tabular}
    
  \end{center}
    
The table shows that if both $\vn_2$ and $\vm_2$ are not majorized by at least one of $(\vn_1, \vm_1)$, then the inequality $\cT(\vec{n}_1\preceq \vec{m}_1) \ge \cT(\vec{n}_2\preceq \vec{m}_2)$ holds (the first three cases in the table support it). 
%The relation of majorization with $\cT$ quantitatively  supports our speculation that more majorized systems are easier to simulate classically.
\\
%\paragraph*{Examples}
%\begin{itemize}
%	\item $Maj[\vec{n},\vec{m}] = (\frac{N}{a},\cdots, \frac{N}{a}, \vec{0}_{N-a})$: $S(Maj[\vec{n},\vec{m}])=\log_N(a)$,   $Rt(\vec{n},\vec{m}) = \Big( \frac{N+a}{a}\Big)^{a}ab $ (here $b$ is  the number of nonzero elements of the majorized vector),
%	\item $Maj[\vec{n},\vec{m}] = (N, \vec{0}_{N-1})$: $S(Maj[\vec{n},\vec{m}])=0$, $Rt(\vec{n},\vec{m}) =(N+1)b $.
%\end{itemize}

%\subsection*{The role of the elementary quantum complexity in the  classical simulation of matrix permanents}

We can intuitively understand the behavior of $\cT(\vn \prec \vm)$ along the majorization of $(\vn,\vm)$
by considering the physical implication of $S_B^q$. 
Dissecting the functional components of $\cT(\vn \prec \vm)$, $\prod_i(m_i+1)$ arises from the summations from $v_i=0$ to $v_i=n_i$ ($1 \le i \le M$) in Eq. \eq{eq:permseries2}; 
conducting summations for each mode is equivalent to counting the identical photons one at a unit time. In this procedure, the identical particles become distinct (or effectively distinguishable) with each other. In other words, \emph{we treat the quantum particles classically in the classical algorithm of computing matrix permanents}; at the detector, the procedure can be considered as the ``unpacking process'' that we mentioned earlier in Theorem \ref{1} in Section \ref{Q} (a simple example that intuitively explains our current statement is presented in Appendix \ref{example}). 

\begin{figure}[t]
	\centering
	\includegraphics[width=8cm]{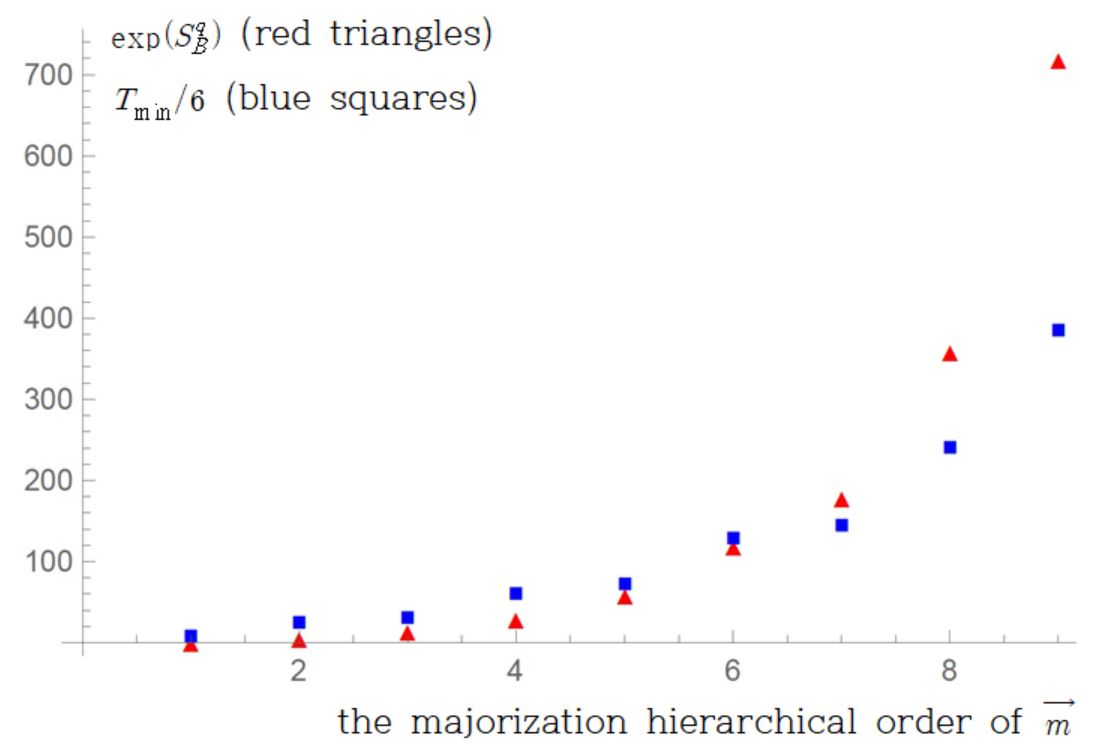}
	\caption{$\exp(S_B^q)= Q_{\vm}$ and $T_{\min}(\vn \prec\vm)/6$ along the hierarchical order of $\vm$ with $N=6$. The horizontal axis represents the discrete majorization hierarchy of $N=6$ as follows: $\vm = (6, \vec{0}_{M-1}) \to (5,1,\vec{0}_{M-1} ) \to (4,2,\vec{0}_{M-1}) \to (4,1,1,\vec{0}_{M-1}) \to (3,2,1,\vec{0}_{M-1})\to (3,1,1,1,\vec{0}_{M-1}) \to (2,2,1,1,\vec{0}_{M-1}) \to (2,1,1,1,1,\vec{0}_{M-1})\to (1,1,1,1,1,1,\vec{0}_{M-1})$. The vertical axis represents $\exp(S_B^q)= Q_{\vm}$ (red triangles) and $T_{\min}(\vn \prec\vm)/6$ (blue squares) for each $\vm$. $Q_{\vm}$ and $T_{\min}(\vn \prec\vm)/6$ are chosen so that the two quantities can be compared in the same graph. Note that $\vn$ is fixed to $(1,1,1,1,1,1,\vec{0}_{M-1})$. One can see that as $\vm$ becomes less majorized from $1\equiv (6, \vec{0}_{M-1})$ to $9\equiv (1,1,1,1,1,1,\vec{0}_{M-1})$, both Boltzmann entropy and runtime increase monotonically as expected.}
	\label{N=6}
\end{figure}

This standpoint discloses the relation of $\big(S_B^q(\vec{n}),S_B^q(\vm)\big)$ with $\cT_{min}(\vn \prec \vm)$.
As $\vn$ and $\vm$ are more majorized, $S_B^q(\vec{n})$ and $S_B^q(\vm)$ become smaller by Shur-concavity.
$S_B^q$ measures the number of effective distinguishable states per quantum distribution (Definition \ref{SBQ}), and the states are to be counted classically. In consequence, the runtime for counting photons decreases as $S_B^q$ decreases (an $N=6$ example is presented in Fig. \ref{N=6}. The graph shows the relation between $\cT_{\min}$ and $S_B^q$ for the case when $\vn$ is fixed to $(1,1,1,1,1,1, \vec{0}_{M-6})$ ($M \ge 6$) and $\vm$ changes from $(6, \vec{0}_{M=1})$ to $(1,1,1,1,1,1, \vec{0}_{M-6})$ as noted under Fig. 4).

\section{Additive error bound and majorization}\label{e}

Now we focus on the approximated algorithm for the generalized matrix permanent. We delve into the relation of majorization with the runtime $\cT_A$ the additive error bound $\cE$ for the approximated permanent computation. While the approximate runtime is again determined with the majorization of input/output photon distributions, the error bound is determined by the majorization difference (the difference of degree of majorization for two discrete vectors, defined in Appendix) of them. And it is exactly expressed with the Boltzmann and Shannon entropy of elementary quantum  complexity defined in Section \ref{Q}.

The approxmated algorithm that computes the matrix permanent in polynomial time was first suggested by Gurvits \cite{gurvits2005}. The algorithm has the runtime  $\cT_A= \cO(N^2/\eps^2)$ with the nontrivial additive error bound $\cE=\eps$. The case when the computed matrix has repeated rows or columns was presented in \cite{aaronson2012}. Supposing the input $\vn$ is distributed arbitrary, the algorithm has the runtime $\cO(N\a_{\vn}/\eps^2)$ with the error bound $\cE=\eps v_{\vn}$ ($v_{\vec{n}}\equiv \prod_{i=1}^{M}\sqrt{\frac{n_i !}{n_i^{n_i}} }$).
The more general case when the computed matrix has repeated rows and columns was presented in \cite{yung2016}, which has   the runtime $\cO(\a_{\vn}\a_{\vm}/\eps^2)$ with the error bound $\cE=\eps\cdot\min \Big[ \frac{v_{\vec{n}}}{v_{\vec{m}}},  \frac{v_{\vec{m}}}{v_{\vec{n}}}  \Big]$.  This result includes the result of \cite{aaronson2012}. The algorithms we mentioned so far are compared in the table below.

\begin{center}
	\begin{tabular}{| l | l | l |}
		\hline
		 Input/output & runtime $\cT_A$ & $\cE$ \\ \hline
		 $\vn = \vm =\hat{\sigma}(1,\cdots, 1, 0,\ldots,0)$  &  $\cO(N^2/\eps^2)$ & $\eps$ 	 \\ \hline
		 Input $or$ output is arbitrary & $\cO(N\a_{\vec{m}}/\eps^2)$ & $\eps v_{\vm}$ \\ \hline 
	Input $and$ output are arbitrary & $\cO(\a_{\vec{n}}\a_{\vec{m}}/\eps^2)$ & $\eps \frac{v_{\vm} }{v_\vn}$ \\ \hline
		
	\end{tabular}
\end{center} 

%The universal upper bound of the transition amplitude between randomly distributed boson states is presented in \cite{yung2016} as
%\begin{align}
%\label{transition}
%|\<\vec{m}|U|\vec{n}\> | \le  \min\Big[ \frac{v_{\vec{n}}}{v_{\vec{m}}},  \frac{v_{\vec{m}}}{v_{\vec{n}}}  \Big],
%\end{align}
%where $v_{\vec{n}}\equiv \prod_{i=1}^{M}\sqrt{\frac{n_i !}{n_i^{n_i}} }$.
%With a standard Chernoff inequality, the additive error bound for the amplitude (denoted by $\mathcal{E}_{n,m}$) is $ \eps\cdot \min\Big[ \frac{v_{\vec{n}}}{v_{\vec{m}}},  \frac{v_{\vec{m}}}{v_{\vec{n}}}  \Big]$ (note that we again have a choice between two possibilities, as in Eq. \eqref{rt}).
%This result is on a more general case than that in \cite{aaronson2012}.

A first direct observation is that the approximated runtime decreases as the degrees of majorization for input/output distribution vectors increase, which is clear since $\a_{\vn}$ is Schur-concave. Therefore, we can apply the same analysis on the exact runtime $\cT_{\min}$ to the approximated runtime $\cT_A$.

On the other hand, the behavior of the error bound $\cE$  along with the majorization change is more interesting and reveals its close relation with the Boltzmann entropy of elementary quantum complexity.
Since $v_{\vec{n}}\equiv \prod_{i}\sqrt{\frac{n_i !}{n_i^{n_i}} }$ is another Schur-concave function (Eq. \eqref{concave}), similar to Eq. \eqref{rtredef}, we redefine the error bound as
\begin{align}
\mathcal{E}_{\vec{n} \preceq \vec{m}} \equiv  \eps\cdot \frac{v_{\vec{m}}}{v_{\vec{n}} }.
\end{align}
The above equation shows that the error bound depends on the \emph{majorization difference} between input/output distributions, which quantifies how much one discrete vector is more majorized than the other (the rigorous definition of majorization difference is presented in Appendix, Definition \ref{diff}). A greater majorization difference corresponds to a smaller additive error bound.

For example,
when $\vec{m}= \sigma(\vec{n})$ (no majorization difference), $v_{\vec{n}} = v_{\vec{m} }$ and the error bound becomes maximal ($=\eps$). As the other extreme example, 
when $\vec{n} = (1,1,\cdots , 1)$ and $\vec{m} = (N, \vec{0}_{N-1})$ (the largest majorization difference), we have $(v_{\vec{n}} , v_{\vec{m}}) = (1,N!/N^N)$. Then as $N$ becomes very large ($\gg 1$), the error bound becomes very small ($ \to \eps \sqrt{2\pi N}e^{-N}$).

We can also compare the error bound of two systems with different input/output states $( \vec{n}_1 \preceq \vec{m}_1)$ and  $( \vec{n}_2 \preceq \vec{m}_2)$ for some cases as follows:
\begin{align}
\vec{n}_1 =\sigma(\vec{n}_2) \preceq \vec{m}_1 \preceq \vec{m}_2 \quad\Longrightarrow\quad  \mathcal{E}_{\vec{n}_1 \preceq \vec{m}_1} \ge \mathcal{E}_{\vec{n}_2 \preceq \vec{m}_2}, \nn \\
\vec{n}_1 \preceq \vec{n}_2 \preceq \vec{m}_1 = \sigma(\vec{m}_2) \quad\Longrightarrow\quad  \mathcal{E}_{\vec{n}_1 \preceq \vec{m}_1} \le \mathcal{E}_{\vec{n}_2 \preceq \vec{m}_2}.  
\end{align} 
Both inequalities are consistent with the statement that the additive error decreases as the majorization difference increases.

The additive error bound $\cE_{\vec{n}\preceq \vec{m} }$ has a close functional relation with the entropies of elementary quantum complexity defined in Section \ref{Q}.
By taking the logarithm, we find that $\cE_{\vec{n}\preceq \vec{m}}$ is expressed with the difference between $S_B^q$ and $H^q$:
\begin{align}
2\log_2\Big[\frac{\mathcal{E}_{\vec{n} \preceq \vec{m}}}{\eps} \Big] &=\log_2 (v_{\vec{m}})^2 - \log_2 (v_{\vec{n}})^2 \nn \\
& = \big(\log_2 N! - \log_2 [ \prod_i (n_i)! ]  +\sum_i n_i\log_2 n_i \big)  \nn \\
& \quad  - \big( n_i \leftrightarrow m_i \big) \nn \\
& = \big(S_B^q(\vec{n}) -NH^q(\vec{n}) ) - (S_B^q(\vec{m}) -N H^q(\vec{m}) \big), 
\end{align}
which gives
\begin{align}
\label{esbsg}
\mathcal{E}_{\vec{n} \prec \vec{m}} = \eps\cdot 2^{\frac{1}{2}\big( \Delta S(\vec{n})-\Delta S(\vec{m}) \big)},
\end{align}
where $\Delta S (\vec{n}) \equiv S_B^q( \vec{n}) -NH^q(\vec{n} )$.
Note that $S_B^q( \vec{n})$, $H^q(\vec{n} )$ and $\Delta S(\vec{n})$ are  all Schur-concave functions. 

Eq. \eqref{esbsg} indicates an intriguing property of LON: rather than the Boltzmann (or Shannon) entropy itself, \emph{the difference between the Boltzmann and Shannon entropy, $\Delta S$, determines the additive error}. When all nonzero elements of $\vn$ becomes large, i.e., $n_i \gg 1$, $\Delta S$ goes to zero by Stirling approximation.

We can explain the implication of Eq. \eqref{esbsg} in consideration of the physical process related to the error bound.
The additive error is calculated for fixed boundary conditions, i.e., some specific input and output photon distributions. By preparing a state before the interferometic operation (pre-selecting a state), we obtain the information about the input photon distribution. As mentioned under Eq. \eqref{shannon}, the obtained information per mode is quantified by Shannon entropy $H$. As a result, $NH$ should be subtracted when the error for the approximated transition probability is considered. The same interpretation also works for the output state (post-selected state).
Since the information on the photon distribution is known to us by post-selection,  the corresponding quantity $NH$ should be subtracted as well.

%This input/output state symmetry imposed by pre- and post-selection can be seen as a special form of the time-reversal symmetry. This symmetry suggests that the two state vector formalism(TSVF)   \cite{aharonov1964,aharonov2001} would provide a macroscopic viewpoint to understand LON.\\

We can also see the behavior of $\cE_{\vec{n}\preceq \vec{m} } $ when the photon number is very large (approaching the Bose-Einstein condensation limit) using Stirling's formula
\begin{align}
-\log_2 n! +n\log_2 n = \log_2 e\cdot n -\frac{1}{2}\log_2 2\pi n + \cO(\frac{1}{n}).  
\end{align}
Then
\begin{align}
\Delta S(\vec{n}) \simeq  \log_2 e \cdot N -\frac{1}{2} \sum_{i=1}^M \log_2 2\pi n_i,   
\end{align}
when all $n_i$ are large, and 
\begin{align}
\label{ebec}
\cE_{\vec{n}\preceq \vec{m} } =\eps\cdot 2^{-\frac{1}{4}\sum_i \log_2 \frac{n_i}{m_i} }= \eps \prod_{i=1}^M \Big(\frac{m_i}{n_i}\Big)^\frac{1}{4}. 
\end{align}
With the strict Schur-concavity of $\sum_i\log_2 n_i$ \cite{olkin2}, it is directly seen that the most right hand side of Eq. \eqref{ebec} is not larger than $\eps$.

Before closing this section, it is worth mentioning that the additive error bound analysis we presented here is hard to detect when the transition amplitudes themselves are very small. For such cases, multiplicative errors should be considered instead of additive errors. It would be another interesting topic whether some pattern that depends on the majorization again appears in the multiplicative error analysis.

\paragraph*{Remark.--}   It is well-known that a LON with nonclassical photon input can generate entanglement (see, e.g., \cite{kim2002ms, wang2002xb, jiang2013z, stanisic2017generating}). Considering entanglement is a crucial quantum resource in several quantum computational processes, one can ask whether $\cT$ and $\cE$ simulating LON scattering amplitudes have some relation with entanglement. A quantitative analysis on the generation of entanglement in LON is recently given in Ref.~\cite{stanisic2017generating}. We can compare the results in there with ours. However, the behavior of the generated entanglement looks different from the patterns of $\cT$ and $\cE$ (see Table I of Ref.~\cite{stanisic2017generating}).  While $\cT$ decreases when photons are bunched, the entanglement bound for bunched photons are higher than the unbunched case. Since the amount of entanglement depends on the partition of the modes and the choice of the operator $\hU$, it looks obscure to find some manifest relation of the generated entanglement with $\cT$ and $\cE$. On the other hand, as we have discussed so far, the role of majorization (and $S_B^q$) determined by the input/output photon distributions is obvious. 

%\subsection*{Comparison of $\cE_{\vn \preceq \vm}$ with $\cT(\vn \preceq \vm)$}

%We saw in this section that the additive error $\cE_{\vec{n}\preceq \vec{m} }$ is determined by the majorization difference between input and output states. This result contrasts with $\cT(\vn \preceq \vm)$ of Section \ref{Rt}, which depends on the degree of input/output majorization themselves. This distinction comes from the way of classical computation of the matrix permanent.

%When we simulate the process of transition between arbitrary input/output states, we need to spend some time on simulating the input Fock state (or output, depending on the majorization relation). Indeed, Eq. \eqref{eq:Fockseries} shows that $\prod_i(m_i+1)\a_\vm$ of $\cT(\vn \prec \vm)$ comes from the Fock state expansion. As a result, the degree of majorization itself affects $\cT(\vn \prec \vm)$. On the other hand, the additive error for the approximated computation comes with pre-selected (prepared) and post-selected states, about which we have the information on the photon distributions, so the absolute degree of majorization for the states cannot be a crucial variable that determines the error bound.

\section{Discussions}\label{discussion}

In this paper, we presented quantitative relations between majorization and classical simulations of LON scattering processes. As a measure of the elementary complexity of a given photon distribution, we introduced a Schur-concave function $S_B^q$ (the Boltzmann entropy of elementary quantum complexity). The generalized classical runtime for computing the matrix permanent $\cT$ and the additive error $\cE$ for approximated permanent computation were analyzed with the majorization and $S_B^q$. For systems with more majorized input/output photon distributions, the known classical algorithms to simulate the LON scattering process had shorter runtime.

We expect that our current research can develop in several directions. For example, the direct relation between $\cE$ and the Boltzmann entropy can be compared with the relation between $\cT$ and the generalized concurrence explained in \cite{huh2016computing}, which can provide a clue to understand the quantum supremacy of LON from the perspective of resource theories with computable quantum measures (see, e.g., \cite{plenio2005introduction,streltsov2017colloquium}). Our results also could be applied to the computational complexity problems of linear optical systems with continuous variables~\cite{rahimi2015,rahimi2016sufficient,hamilton2017gaussian}.
%Finally, as mentioned in the begining of Section \ref{Q}, the distribution of indistinguishable particles, which is a crucial factor for determining the computational complexity of LON, can create entanglement \cite{killoran2014extracting}. Therefore, it will be possible to understand the complexity problem of LON from the standpoint of the quantum resource theories (see, e.g., \cite{plenio2005introduction,streltsov2017colloquium}), comparing the amount of entanglement with the complexity of a specific scattering process. 

\section*{Acknowledgements}
The authors are grateful to the anonymous referees for their helpful advice. 
This work was supported by Basic Science Research Program through the National Research Foundation of Korea (NRF) funded by the Ministry of Education, Science and Technology (NRF-2015R1A6A3A04059773). In addition, 
S. C. acknowledges Professor Jung-Hoon Chun for his generous support during the research.

\appendix

\section{Young diagrams and majorization}\label{young}

A Young diagram is a set of arranged boxes in left-justified rows with non-increasing row lengths from the top. With $N$ boxes in it, a diagram represents a partition of an integer $N$ (this can be understood as a partition of $N$ identical particles into modes for our case). We can construct a hierarchy of Young diagrams with the same $N$ as in the definition  below \cite{ruch}. Then the hierarchy has a one-to-one correspondence with a majorization relation of discrete vectors of $N$.

\begin{definition}
	\label{greater}
	A diagram $\gamma$ is greater than a diagram $\gamma '$ (or $\gamma \supset \gamma'$), if $\gamma$ can be built from $\gamma'$ by moving boxes  upward (from shorter rows to longer or equal-length ones).
\end{definition}
%\begin{definition}
%	 $\gamma \supset \gamma'$, if $\gamma'$ can be built from $\gamma$ by moving boxes from right to left (from
%	 shorter columns into longer or equal ones).
%\end{definition}
One example is as follows:
	\begin{align}
		\Yvcentermath1
 \gamma = \yng(5,1,1) \quad \supset \gamma' =\yng(3,2,2), \nn 
\end{align}
since we can build $\gamma$ from $\gamma'$ by moving one box in the second row and another in the third row to the first row. 
 
Suppose that a discrete vector $\vec{\gamma}$ is defined as a $N$-dimensional vector with $\gamma_i=$(number of boxes in the $i$'th row).  Then we have
\begin{align}
\vec{\gamma} \succ \vec{\gamma}' \iff \gamma \supset \gamma'.
\end{align}
Hence, the non-increasing distribution vectors of $N$ bosons are exactly mapped to the Young diagrams of $N$ boxes. A more intuitive proof of Theorem \ref{1} comes from this relation as follows:
\begin{proof}[Proof of $Q_{\vn}=N!/\prod_i(n_i)!$ with Young diagrams] A distribution of $N$ identical particles corresponds to a Young diagram with $N$ boxes, and the number of distributions of $N$ distinguishable particles with the same partition corresponds to Young tableaux (Young diagrams with a label on each box). For example, if 7 identical particles are distributed (partitioned) as the $(3,2,2)$ modulo permutation (just as $\gamma'$ of the above example), the correponding arrangements for 7 distinguishable particles with the partition $(3,2,2)$ is given by
\begin{align}
\Yvcentermath1
\young(123,45,67), \quad \young(124,35,67), \quad \young(245,17,36), \cdots .  \nn
\end{align} 
There are $\frac{7!}{3!2!2!}$ possible ways of labeling boxes. For $N$ particles with a distribution vector $\vec{n}=(n_1,n_2,\cdots, n_M)$ with $\sum_{i=1}^M n_i=N$, the number is given by $\frac{N!}{\prod_i(n_i)!}$.
\end{proof}

Definition \ref{greater} provides a complete array of Young diagrams with $N$ boxes, which is equivalent to a complete array of distribution vectors of $N$ identical particles along the majorization. As a simple example, the array of young diagrams with $N=5$ is as follows:
\begin{align}
\tiny\Yvcentermath1
&\yng(5) \supset
 \yng(4,1) \supset
 \yng(3,2) \supset \yng(3,1,1) \nn \\
&\supset \yng(2,2,1) \supset \yng(2,1,1,1) \supset \yng(1,1,1,1,1) \nn \\ 
\end{align}
In the distribution vector form, it is rewritten as
\begin{align}
&(5, \vec{0}_{M-1}) \to (4,1, \vec{0}_{M-2}) \to (3,2, \vec{0}_{M-2}) \nn \\ &\to (3,1,1, \vec{0}_{M-3})  \to (2,2,1, \vec{0}_{M-3})\to (2,1,1,1, \vec{0}_{M-4}) \nn \\
& \to (1,1,1,1,1, \vec{0}_{M-5}).
\end{align}
One arrow in the array correponds to one step of the vector redistribution, and the $N=5$ array consists of 6 steps. 
The majorization difference between two distribution vectors is defined using the array of Young diagrams:
\begin{definition}
	\label{diff}
	For two distribution vectors $\vn \prec \vm$ ($\sum_i n_i = \sum_i m_i =N$) and their corresponding young diagrams $\gamma_n \subset \gamma_m$, the majorization difference of $\vn$ and $\vm$ is equal to the number of steps from  $\gamma_n$ to $\gamma_m$.
\end{definition} 
For example, the majorization difference of $\vn = (4,1, \vec{0}_{M-2})$ and $\vm =  (2,1,1,1, \vec{0}_{M-4})$ is 4. 
\\

\section{Comparison of distinguishable and indistinguishable scattering processes with the simplest example}\label{example}

Here we compare the scattering process of distinguishable and indistinguishable particles in passive LON for the simplest case, i.e., $(N,M)=(2,2)$. Let's suppose that two particles are in different modes to each other.
When the particles are distinguishable, they are represented as different creation operators $a_i^{\dagger}$ and $b_i^{\dagger}$. Then the initial state is, e.g., given by $\ha^{\dagger}_1 \hb^{\dagger}_2|0\>$ (note that there exists another possibility, $\hb_1^{\dagger}\ha_2^{\dagger}|0\>$).  Therefore, under the unitary operations $\hU\ha_i^{\dagger}\hat{U}^{\dagger} = \sum_{j}U_{ij}\ha_j^{\dagger}$ and $\hU\hb_i^{\dagger}\hat{U}^{\dagger} = \sum_{j}U_{ij}\hb_j^{\dagger}$, the final state becomes
	\begin{align}
	\Big[ U_{11}U_{21}\ha_1^{\dagger}\hb_1^{\dagger} + U_{11}U_{22}\ha_1^{\dagger}\hb_2^{\dagger} + U_{12}U_{21}\ha_2^{\dagger}\hb_1^{\dagger} + U_{12}U_{22}\ha_2^{\dagger}\hb_2^{\dagger}\Big] |\vec{0}\>.
	\end{align} The amplitudes of each distinctive state is determined by one term, i.e., $U_{11}U_{21}$, etc.  
On the other hand, when particles are distinguishable, the initial state is given by
	$\ha^{\dagger}_1 \ha^{\dagger}_2|0\>$
and the final state becomes
	\begin{align}
	&\Big[ U_{11}U_{21}(\ha_1^{\dagger})^2 + (U_{11}U_{22} + U_{12}U_{21})\ha_1^{\dagger}\ha_2^{\dagger} + U_{12}U_{22}(\ha_2^{\dagger})^{2}\Big] |\vec{0}\> \nn \\
	&=	\Big[ U_{11}U_{21}(\ha_1^{\dagger})^2 + \textrm{perm}[U]\ha_1^{\dagger}\ha_2^{\dagger} + U_{12}U_{22}(\ha_2^{\dagger})^{2}\Big] |\vec{0}\>.
	\end{align}

When two particles are in the same mode at the output ($\vn = (2,0)$ or $(0,2)$), only one term determines the scattering amplitude. However, when two particles are in different modes at the output,  two terms to determine the scattering amplitude. To  simulate this amplitude classically, we need to add these terms one by one. This is what we meant in Section \ref{Rt} that \emph{we treat the quantum particles classically in the classical algorithm of computing matrix permanents}.
Hence, this example intuitively explains how the (identical) photon number distribution affects the complexity of passive LON process.

\bibliography{MajEntComp}% Produces the bibliography via BibTeX.

\end{document}